\theoremstyle{plain}
\newtheorem{corollary}{Corollary}
\newtheorem{proposition}{Proposition}
\numberwithin{equation}{section}
\begin{document}
\title{Cartel Stability under Quality Differentiation}
\author{Iwan Bos }
\author{Marco A. Marini\medskip }
\thanks{We appreciate the comments of Maria Rosa Battaggion, Boris Ginzburg,
Joseph E. Harrington Jr., Ronald Peeters, an anonymous referee, participants
at the 2018 Oligo Workshop and the 2018 EARIE conference in Athens. All
opinions and errors are ours alone.\medskip \\
Iwan Bos, Department of Organization \& Strategy, School of Business and
Economics, Maastricht University. E-mail:\
i.bos@maastrichtuniversity.nl.\medskip \\
Marco A. Marini, Department of Social and Economic Sciences, University of
Rome La Sapienza. E-mail:\ marini@dis.uniroma1.it.}
\date{October 2018}

\begin{abstract}
This note considers cartel stability when the cartelized products are
vertically differentiated. If market shares are maintained at pre-collusive
levels, then the firm with the lowest competitive price-cost margin has the
strongest incentive to deviate from the collusive agreement. The
lowest-quality supplier has the tightest incentive constraint when the
difference in unit production costs is sufficiently small. \medskip \medskip

%TCIMACRO{\TeXButton{\noindent}{\noindent}}%
%BeginExpansion
\noindent%
%EndExpansion
\textbf{Keywords:\ }\textit{Cartel Stability},\textit{\ Collusion, Vertical
Differentiation}.$\bigskip $\medskip

%TCIMACRO{\TeXButton{\noindent}{\noindent}}%
%BeginExpansion
\noindent%
%EndExpansion
\textbf{JEL Classification: }D43, L13, L41\textbf{.}
\end{abstract}

\maketitle

\section{Introduction}

In this note, we examine cartel stability when the cartelized products are
vertically differentiated. Goods or services are differentiated vertically
when there is consensus among consumers about how to rank them quality-wise;
comparing products A and B, all agree A to have a higher (perceived) value
than B or \textit{vice versa}. There might, however, still be a demand for
lower-quality goods when buyers face budget constraints or differ in their
willingness to pay for quality. This creates an incentive for suppliers to
compete through offering different price-quality combinations.

One implication of this price-quality dispersion is that firms that consider
colluding typically face heterogeneous incentive constraints. The fact that
firms are induced to charge different prices, for example, affects both
collusive and noncollusive profits. From a supply-side perspective, there
commonly exists a positive relationship between the quality of a good and
its production costs. This, too, impacts both sides of the constraint. It is
therefore \textit{a priori} unclear how quality differentiation impacts the
sustainability of collusion.

The scarce literature on this topic provides mixed results and, moreover,
does not consider the potential impact of cost heterogeneity.\footnote{%
An extensive and detailed overview of this literature is provided by Marini
(2018).} Assuming identical costs, H\"{a}ckner (1994) and Symeonidis (1999)
both analyze an infinitely repeated vertically differentiated duopoly game.%
\footnote{%
Apart from being analytically convenient, the identical cost assumption can
be defended on the grounds that the difference in quality may mainly come
from upfront sunk investments in which case the impact on prices would be
limited.} H\"{a}ckner (1994) studies a variation of the setting in
Gabszewicz and Thisse (1979) and Shaked and Sutton (1982) and finds that it
is the high-quality supplier who has the strongest incentive to deviate. By
contrast, Symeonidis (1999) considers a representative consumer model with
horizontal and vertical product differentiation and establishes that it is
the low-quality seller who is most eager to leave the cartel.

In the following, we analyze an $n$-firm infinitely repeated game version of
the classic vertical differentiation model of Mussa and Rosen (1978) where
production costs are assumed to be increasing in quality.\footnote{%
Like the model in H\"{a}ckner (1994), this is a model with heterogeneous
customers. Apart from the number of firms and cost heterogeneity, it differs
in terms of consumers' utility specification.} Under the assumption that
colluding firms maintain their pre-collusive market shares, we find that it
is the competitive mark-up rather than the quality of the product that
drives the incentive to deviate. Specifically, it is the supplier with the
lowest noncooperative price-cost margin who has the strongest incentive to
chisel on the cartel. Moreover, our analysis confirms the above-mentioned
conclusion by Symeonidis (1999) when the difference in unit costs is
sufficiently small.

The next section presents the model. Section 3 contains the main finding.
Section 4 concludes.

\section{Model}

There is a given set of suppliers, denoted $N=\left\{ 1,\ldots ,n\right\} $,
who repeatedly interact over an infinite, discrete time horizon. In every
period $t\in 
%TCIMACRO{\U{2115} }%
%BeginExpansion
\mathbb{N}
%EndExpansion
$, they simultaneously make price decisions with the aim to maximize the
expected discounted sum of their profit stream. Firms face a common discount
factor $\delta \in \left( 0,1\right) $ and all prices set up until $t-1$ are
assumed public knowledge.

Each firm $i\in N$ sells a single variant of the product with quality $v_{i}$%
. We assume $\infty >v_{n}>v_{n-1}>...>v_{1}>0$ and refer to firm $n$ as the 
\textit{top firm}, firm $1$ as the \textit{bottom firm} and all others as 
\textit{intermediate firms}. Unit production costs of firm $i\in N$ are
given by the constant $c_{i}$ and we suppose these costs to be positive and
(weakly) increasing in quality, \textit{i.e.}, $c_{n}\geq c_{n-1}\geq \ldots
\geq c_{1}>0$.

Consumers have a valuation for the various product types of $\theta $, which
is uniformly distributed on $[\underline{\theta },\overline{\theta }]\subset 
\mathcal{%
%TCIMACRO{\U{211d} }%
%BeginExpansion
\mathbb{R}
%EndExpansion
}_{++}$ with mass normalized to one. A higher $\theta $ corresponds to a
higher gross utility when consuming variant $v_{i}$. Buyers purchase no more
than one item so that someone `located' at $\theta $ obtains the following
utility

\begin{equation}
U(\theta )=\left\{ 
\begin{array}{c}
\theta v_{i}-p_{i}\text{ when buying from firm }i \\ 
0\text{ when not buying,}%
\end{array}%
\right.  \label{utility}
\end{equation}

%TCIMACRO{\TeXButton{noindent}{\noindent}}%
%BeginExpansion
\noindent%
%EndExpansion
where $p_{i}$ $\in \left[ 0,\overline{\theta }v_{n}\right] $ is the price
set by firm $i$.\footnote{%
Note that none of the buyers would buy at prices in excess of $\overline{%
\theta }v_{n}$.} Using (\ref{utility}),\ it can be easily verified that a
consumer at $\theta _{i}\in \lbrack \underline{\theta },\overline{\theta }]$
is indifferent between buying from, say, firm $i+1$ and firm $i$ when

\begin{equation}
\theta _{i}(p_{i},p_{i+1})=\frac{p_{i+1}-p_{i}}{v_{i+1}-v_{i}},
\label{theta}
\end{equation}%
for every $i=1,2,...,n-1$. In the ensuing analysis, we further assume that
the market is and remains \textit{covered} (\textit{i.e.}, all consumers buy
a product).\footnote{%
This is a common assumption in contributions that employ this type of
spatial setting. See, for example, Tirole (1988, pp.296-298) and Ecchia and
Lambertini (1997). We discuss some implications of this assumption at the
end of Section 3 and consider the possibility of an uncovered collusive
market in an online appendix to this paper.}

Current profit of the bottom firm ($i=1$)\ is therefore given by%
\begin{equation}
\pi _{1}\left( p_{1},p_{2}\right) =\left( p_{1}-c_{1}\right) \cdot \left(
\theta _{1}-\underline{\theta }\right) ,  \label{profit-bottom}
\end{equation}%
where $\theta _{1}=\theta _{1}(p_{1},p_{2})$ is as specified by (\ref{theta}%
). For each intermediate firm ($i=2,3,...,n-1$) profit is 
\begin{equation}
\pi _{i}\left( p_{i-1},p_{i},p_{i+1}\right) =\left( p_{i}-c_{i}\right) \cdot
\left( \theta _{i}-\theta _{i-1}\right) ,  \label{profit -intermediate}
\end{equation}%
and for the top firm ($i=n$) it is%
\begin{equation}
\pi _{n}\left( p_{n-1},p_{n}\right) =\left( p_{n}-c_{n}\right) \cdot \left( 
\overline{\theta }-\theta _{n-1}\right) .  \label{profit top}
\end{equation}

Before analyzing the infinitely repeated version of the above game, let us
first consider the one-shot case in more detail. In this setting, each firm
simultaneously picks a price to maximize its profit as specified in (\ref%
{profit-bottom})-(\ref{profit top}). Following the first-order conditions,
this yields three types of best-response functions:

\begin{equation}
\widehat{p}_{1}(p_{2})=\frac{1}{2}\left( p_{2}+c_{1}-\underline{\theta }%
(v_{2}-v_{1})\right)  \label{bottom-best}
\end{equation}%
for the bottom firm ($i=1$). For each intermediate firm ($i=2,3,...,n-1$),
the best-reply is given by

\begin{equation}
\widehat{p}_{i}(p_{i-1},p_{i+1})=\dfrac{1}{2}\dfrac{%
p_{i-1}(v_{i+1}-v_{i})+p_{i+1}(v_{i}-v_{i-1})}{(v_{i+1}-v_{i-1})}+\dfrac{1}{2%
}c_{i}.  \label{inter-best}
\end{equation}%
The best-response function of the top firm ($i=n$) is 
\begin{equation}
\widehat{p}_{n}(p_{n-1})=\frac{1}{2}\left( p_{n-1}+c_{n}+\overline{\theta }%
(v_{n}-v_{n-1})\right) .  \label{top-best}
\end{equation}%
Since the action sets are compact and convex and the above best-reply
functions are \textit{contractions}, there exists a unique static Nash
equilibrium price vector $p^{\ast }$ for any finite number of firms.%
\footnote{%
See, for instance, Friedman (1991, p.84). A sufficient condition for the
contraction property to hold is (see, for example, Vives 2000, p.47):%
\begin{equation*}
\frac{\partial ^{2}\pi _{i}}{\partial \left( p_{i}\right) ^{2}}+\tsum_{j\neq
i}\left\vert \frac{\partial ^{2}\pi _{i}}{\partial p_{i}\partial p_{j}}%
\right\vert <0,
\end{equation*}%
which, using (\ref{profit -intermediate}) for all intermediate firms $%
i=2,...,n-1$, becomes%
\begin{equation*}
\frac{v_{i-1}-v_{i+1}}{\left( v_{i+1}-v_{i}\right) \left(
v_{i}-v_{i-1}\right) }<0,
\end{equation*}%
which holds. The same applies for the top and the bottom firm.} Finally, we
impose two more conditions to ensure that the equilibrium solution is
interior (\textit{i.e.}, all firms have a positive output at $p^{\ast }$)
and that the market is indeed covered at the single-shot Nash equilibrium: 
\begin{equation}
\overline{\theta }>\theta _{n-1}^{\ast }>\theta _{n-2}^{\ast }>\ldots
>\theta _{i}^{\ast }>...>\theta _{1}^{\ast }>\underline{\theta }>\frac{%
p_{1}^{\ast }}{v_{1}}>0,  \label{H1}
\end{equation}%
where $\theta _{i}^{\ast }\equiv $ $\theta _{i}\left( p_{i}^{\ast
},p_{i+1}^{\ast }\right) $ and $p_{i}^{\ast }\geq c_{i}$, for all $i\in N$.

\section{Sustainability of Collusion}

Within the above framework, we now study the sustainability of collusion
assuming a standard grim-trigger punishment strategy. The incentive
compatibility constraint (ICC) of a firm $i\in N$ is then given by:

\begin{equation}
\Omega _{i}\equiv \pi _{i}^{c}-\left( 1-\delta \right) \cdot \pi
_{i}^{d}-\delta \cdot \pi _{i}^{\ast }\geq 0,  \label{ICC}
\end{equation}%
where $\pi _{i}^{c}=\pi _{i}\left( p_{i-1}^{c},p_{i}^{c},p_{i+1}^{c}\right) $
is its collusive payoff, $\pi _{i}^{d}=\pi _{i}\left(
p_{i-1}^{c},p_{i}^{d},p_{i+1}^{c}\right) $, with $p_{i}^{d}=\widehat{p}%
_{i}(p_{i-1}^{c},p_{i+1}^{c})$, is its deviation payoff and $\pi _{i}^{\ast
}=\pi _{i}\left( p_{i-1}^{\ast },p_{i}^{\ast },p_{i+1}^{\ast }\right) $ is
its Nash equilibrium payoff.\smallskip\ A cartel comprising the entire
industry is thus sustainable only when $\Omega _{i}\geq 0$ for all $i\in N$%
.\smallskip

In principle, this set-up allows for a plethora of sustainable collusive
contracts. In the following, we limit ourselves to what is perhaps the
simplest possible agreement. Specifically, we consider the maximization of
total cartel profits without side payments under the assumption that firms
maintain their market shares at pre-collusive levels. Such an agreement is
appealing for several reasons. First, it seems a natural focal point in the
issue of how to divide the market. Second, there have been quite a few
cartels that employed such (or similar)\ market-sharing scheme.\footnote{%
See, for example, Harrington (2006).} Third, it is arguably one of the most
subtle arrangements in that firm behavior maintains a competitive
appearance, thereby minimizing the possibility of cartel detection.

Let us now address the question of what collusive price vector such an
all-inclusive cartel would pick. As an initial observation, notice that the
fixed market share assumption implies that the price ranking should stay the
same (\textit{i.e.}, collusive prices are strictly increasing in quality).
Moreover, as market size is given, the lowest-valuation buyer should still
be willing to buy the product. This means that

\begin{equation*}
\underline{\theta }v_{1}-p_{1}^{c}\geq 0.
\end{equation*}

Next, note that the fixed market share rule in combination with the covered
market assumption implies that each `marginal consumer's location' remains
unaffected. Specifically, the consumer who was indifferent between firm $1$
and $2$ absent collusion has now the following utility when buying from firm 
$1$:%
\begin{equation*}
U(\theta _{1}^{\ast })=\left( \frac{p_{2}^{\ast }-p_{1}^{\ast }}{v_{2}-v_{1}}%
\right) v_{1}-p_{1}^{c},
\end{equation*}%
which in turn determines the collusive price for the product of firm $2$: 
\begin{equation*}
\left( \frac{p_{2}^{\ast }-p_{1}^{\ast }}{v_{2}-v_{1}}\right)
v_{1}-p_{1}^{c}=\left( \frac{p_{2}^{\ast }-p_{1}^{\ast }}{v_{2}-v_{1}}%
\right) v_{2}-p_{2}^{c}.
\end{equation*}%
Rearranging gives,%
\begin{equation*}
p_{2}^{c}=p_{2}^{\ast }+\left( p_{1}^{c}-p_{1}^{\ast }\right) .
\end{equation*}%
A higher collusive price by firm 2 would mean that the customer on the
boundary prefers firm $1$, which contradicts market shares being fixed.
Likewise, a lower price implies a decrease in sales for firm 1 and therefore
cannot occur either. The collusive prices for all other firms can be
determined in a similar fashion. In general, the collusive price of firm $%
i\in N\backslash \{1\}$ \ is equal to its Nash price plus the price increase
by the lowest-quality firm: 
\begin{equation}
p_{i}^{c}=p_{i}^{\ast }+\left( p_{1}^{c}-p_{1}^{\ast }\right) .
\label{collusive prices}
\end{equation}%
Both cartel profits and the incentive constraints are therefore effectively
a function of $p_{1}^{c}$ alone.

Since prices are strategic complements, it is clear that the cartel would
like to set $p_{1}^{c}=\underline{\theta }v_{1}$. The collusive contract
with $p_{1}^{c}=\underline{\theta }v_{1}$ might not be sustainable, however,
because one or more ICC's may be binding. The next result shows that it is
the supplier with the lowest noncollusive profit margin who has the tightest
incentive constraint.

\begin{proposition}
For any $i,j\in N$ and $j\neq i$, if $p_{i}^{\ast }-c_{i}>p_{j}^{\ast
}-c_{j} $, then $\Omega _{i}>\Omega _{j}$.
\end{proposition}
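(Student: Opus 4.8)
The plan is to reduce every object in $\Omega_i$ to a function of three scalars: the Nash margin $m_i\equiv p_i^{\ast}-c_i$, the common collusive increment $\Delta\equiv p_1^{c}-p_1^{\ast}$, and the absolute slope $s_i$ of firm $i$'s demand in its own price (so $s_1=\tfrac{1}{v_2-v_1}$, $s_n=\tfrac{1}{v_n-v_{n-1}}$, and $s_i=\tfrac{1}{v_{i+1}-v_i}+\tfrac{1}{v_i-v_{i-1}}$ for an intermediate firm). First I would invoke (\ref{collusive prices}) to write $p_i^{c}=p_i^{\ast}+\Delta$ for every $i$. Because each threshold (\ref{theta}) depends on prices only through the difference $p_{i+1}-p_i$, a common shift leaves every $\theta_i$, and hence every quantity, unchanged: $q_i^{c}=q_i^{\ast}$. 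Collusion therefore raises firm $i$'s price by $\Delta$ on an unchanged quantity, giving the clean punishment loss $\pi_i^{c}-\pi_i^{\ast}=\Delta\,q_i^{\ast}$.

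The second step is to pin down the deviation. Evaluating the best replies (\ref{bottom-best})--(\ref{top-best}) at the shifted neighbor prices, and using that in each case the weights on the neighbors' prices sum to $\tfrac12$, I obtain $p_i^{d}=p_i^{\ast}+\tfrac12\Delta$: every firm undercuts the collusive price by exactly half the increment. Writing firm $i$'s residual demand as $q_i=K_i-s_i p_i$, the first-order condition forces $q_i=s_i(p_i-c_i)$ at any best reply, so both the Nash and the deviation profit are ``slope times squared margin'': $\pi_i^{\ast}=s_i m_i^{2}$ and $\pi_i^{d}=s_i\bigl(m_i+\tfrac12\Delta\bigr)^{2}$. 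Together with $\pi_i^{c}=(m_i+\Delta)\,q_i^{\ast}=s_i m_i(m_i+\Delta)$, this yields the one-shot deviation gain $\pi_i^{d}-\pi_i^{c}=\tfrac14 s_i\Delta^{2}$.

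Assembling the two pieces, $\Omega_i=\delta(\pi_i^{c}-\pi_i^{\ast})-(1-\delta)(\pi_i^{d}-\pi_i^{c})=s_i\Delta\bigl(\delta m_i-\tfrac{1-\delta}{4}\Delta\bigr)$, where I have used $q_i^{\ast}=s_i m_i$. Since $s_i\Delta>0$, the bracket fixes the sign, so $\Omega_i\ge 0$ iff $m_i\ge\tfrac{(1-\delta)\Delta}{4\delta}$ --- a threshold common to all firms --- and equivalently the critical discount factor $\underline{\delta}_i=\Delta/(4m_i+\Delta)$ is strictly decreasing in $m_i$. This already identifies the lowest-margin supplier as the one whose constraint binds first.

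For the stated strict inequality I would compare directly, $\Omega_i-\Omega_j=\Delta\bigl[\delta s_i(m_i-m_j)+B_j(s_i-s_j)\bigr]$ with $B_j=\delta m_j-\tfrac{1-\delta}{4}\Delta$. I expect the slope term $(s_i-s_j)$ to be the main obstacle. When the two firms share the same demand slope --- in particular in the duopoly $n=2$, where $s_1=s_2=1/(v_2-v_1)$ --- this term vanishes and $\Omega_i-\Omega_j=\delta\Delta s\,(m_i-m_j)$, so $m_i>m_j$ gives $\Omega_i>\Omega_j$ at once. In the general $n$-firm case the slopes genuinely differ, and the argument has to lean on the common-threshold form above (i.e.\ on the monotonicity of $\underline{\delta}_i$ in $m_i$), together with the interiority and covered-market restrictions in (\ref{H1}), to rule out a low-margin firm with a large slope overtaking a high-margin firm with a small one. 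Controlling that interaction is the delicate point of the proof.
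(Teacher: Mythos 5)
Your derivation is, step for step, the paper's own proof: the half pass-through $p_i^{d}=p_i^{\ast}+\tfrac12\Delta$, the identities $\pi_i^{d}-\pi_i^{c}=\tfrac14 s_i\Delta^{2}$ and $\pi_i^{d}-\pi_i^{\ast}=s_i\Delta\left(\tfrac14\Delta+m_i\right)$ with $s_i=\frac{v_{i+1}-v_{i-1}}{(v_{i+1}-v_i)(v_i-v_{i-1})}$, and the threshold $\Delta/(4m_i+\Delta)$ in which the slope cancels, are exactly the displayed expressions in the proof of Proposition 1 and the paper's $\overline{\delta}_{i}$, including the separate treatment of the top and bottom firms. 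The one point you leave open --- that the level comparison $\Omega_i>\Omega_j$ does not follow from $\overline{\delta}_{i}<\overline{\delta}_{j}$ alone, because $\Omega_i=s_i\Delta\left(\delta m_i-\tfrac{1-\delta}{4}\Delta\right)$ still carries the firm-specific factor $s_i$ --- is not resolved in the paper either: its proof stops at the critical discount factors and reads the proposition as the statement that the lower-margin firm has the higher threshold $\overline{\delta}$, i.e.\ the binding constraint. So you have in effect reproduced the paper's argument in full, and your residual ``delicate point'' is a fair quibble with the literal statement of the proposition for $n\geq 3$ (where the $s_i$ genuinely differ and one can have $s_jB_j>s_iB_i$ despite $m_i>m_j$), not a gap in your proof relative to theirs; if you want to be charitable to the authors, state and prove the result as a ranking of the $\overline{\delta}_{i}$, for which your argument is already complete.
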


\begin{proof}
Consider the ICC of an intermediate firm $i=2,3,...,n-1$:%
\begin{equation*}
\Omega _{i}\equiv \pi _{i}^{c}-\left( 1-\delta \right) \cdot \pi
_{i}^{d}-\delta \cdot \pi _{i}^{\ast }\geq 0\Leftrightarrow \delta \geq 
\overline{\delta }_{i}\equiv \frac{\pi _{i}^{d}-\pi _{i}^{c}}{\pi
_{i}^{d}-\pi _{i}^{\ast }}.
\end{equation*}

To evaluate the critical discount factor $\overline{\delta }_{i}$ of every
intermediate firm, let us focus on $\pi _{i}^{d}$, $\pi _{i}^{\ast }$ and $%
\pi _{i}^{c}$ in turn. Every firm's $i=2,3,...,n-1$ deviating profit is
given by $\pi _{i}^{d}=\left( p_{i}^{d}-c_{i}\right) \left( \theta
_{i}^{d}-\theta _{i-1}^{d}\right) $, which, using best-replies (\ref%
{inter-best}), yields 
\begin{equation*}
p_{i}^{d}-c_{i}=\frac{%
p_{i-1}^{c}(v_{i+1}-v_{i})+p_{i+1}^{c}(v_{i}-v_{i-1})-(v_{i+1}-v_{i-1})c_{i}%
}{2(v_{i+1}-v_{i-1})}
\end{equation*}%
or%
\begin{equation}
2(v_{i+1}-v_{i-1})\left( p_{i}^{d}-c_{i}\right)
=p_{i-1}^{c}(v_{i+1}-v_{i})+p_{i+1}^{c}(v_{i}-v_{i-1})-(v_{i+1}-v_{i-1})c_{i}.\smallskip
\label{1}
\end{equation}%
$\smallskip $

Moreover,%
\begin{equation*}
\theta _{i}^{d}-\theta _{i-1}^{d}=\frac{\left( v_{i}-v_{i-1}\right)
p_{i+1}^{c}+\left( v_{i+1}-v_{i}\right) p_{i-1}^{c}-\left(
v_{i+1}-v_{i-1}\right) c_{i}}{2\left( v_{i+1}-v_{i}\right) \left(
v_{i}-v_{i-1}\right) },\smallskip
\end{equation*}%
from which\smallskip $\smallskip $%
\begin{equation}
2\left( v_{i+1}-v_{i}\right) \left( v_{i}-v_{i-1}\right) \left( \theta
_{i}^{d}-\theta _{i-1}^{d}\right) =\left( v_{i}-v_{i-1}\right)
p_{i+1}^{c}+\left( v_{i+1}-v_{i}\right) p_{i-1}^{c}-\left(
v_{i+1}-v_{i-1}\right) c_{i}.  \label{2}
\end{equation}%
$\smallskip $

Combining (\ref{1})-(\ref{2}) yields\smallskip $\smallskip $%
\begin{equation*}
2(v_{i+1}-v_{i-1})\left( p_{i}^{d}-c_{i}\right) =2\left(
v_{i+1}-v_{i}\right) \left( v_{i}-v_{i-1}\right) \left( \theta
_{i}^{d}-\theta _{i-1}^{d}\right) \smallskip
\end{equation*}%
and therefore%
\begin{equation*}
\left( \theta _{i}^{d}-\theta _{i-1}^{d}\right) =\frac{(v_{i+1}-v_{i-1})}{%
\left( v_{i+1}-v_{i}\right) \left( v_{i}-v_{i-1}\right) }\left(
p_{i}^{d}-c_{i}\right) .\smallskip
\end{equation*}

Hence, deviating profits can be written as\smallskip 
\begin{equation}
\pi _{i}^{d}=\left( p_{i}^{d}-c_{i}\right) ^{2}\left( \frac{(v_{i+1}-v_{i-1})%
}{\left( v_{i+1}-v_{i}\right) \left( v_{i}-v_{i-1}\right) }\right)
.\smallskip  \label{pi d}
\end{equation}

In a similar vein, intermediate firms' Nash profit can be written
as\smallskip 
\begin{equation}
\pi _{i}^{\ast }=\left( p_{i}^{\ast }-c_{i}\right) ^{2}\left( \frac{%
(v_{i+1}-v_{i-1})}{\left( v_{i+1}-v_{i}\right) \left( v_{i}-v_{i-1}\right) }%
\right) .\smallskip  \label{pi star}
\end{equation}

Following the covered market assumption, we know that 
\begin{equation*}
\theta _{i}^{c}-\theta _{i-1}^{c}=\theta _{i}^{\ast }-\theta _{i-1}^{\ast }
\end{equation*}%
and 
\begin{equation*}
p_{i}^{c}=p_{i}^{\ast }+p_{1}^{c}-p_{1}^{\ast }.
\end{equation*}%
$\smallskip $Hence, every intermediate firm's collusive profit can be
written as\smallskip\ 
\begin{equation}
\pi _{i}^{c}=\left( p_{i}^{c}-c_{i}\right) \left( \theta _{i}^{c}-\theta
_{i-1}^{c}\right) =\left( p_{i}^{\ast }+p_{1}^{c}-p_{1}^{\ast }-c_{i}\right)
\left( \frac{(v_{i+1}-v_{i-1})}{\left( v_{i+1}-v_{i}\right) \left(
v_{i}-v_{i-1}\right) }\left( p_{i}^{\ast }-c_{i}\right) \right) .
\label{pi c}
\end{equation}%
$\smallskip $

Note further that$\smallskip $\ 
\begin{eqnarray*}
p_{i}^{d} &=&\frac{\left( v_{i}-v_{i-1}\right) p_{i+1}^{c}+\left(
v_{i+1}-v_{i}\right) p_{i-1}^{c}+\left( v_{i+1}-v_{i-1}\right) c_{i}}{%
2\left( v_{i+1}-v_{i-1}\right) } \\
&=&\frac{\left( v_{i}-v_{i-1}\right) \left( p_{i+1}^{\ast
}+p_{1}^{c}-p_{1}^{\ast }\right) +\left( v_{i+1}-v_{i}\right) \left(
p_{i-1}^{\ast }+p_{1}^{c}-p_{1}^{\ast }\right) +\left(
v_{i+1}-v_{i-1}\right) c_{i}}{2\left( v_{i+1}-v_{i-1}\right) } \\
&=&p_{i}^{\ast }+\frac{1}{2}\left( p_{1}^{c}-p_{1}^{\ast }\right) .
\end{eqnarray*}

Combining (\ref{pi d})-(\ref{pi c}) yields$\smallskip \smallskip $%
\begin{equation}
\pi _{i}^{d}-\pi _{i}^{c}=\left( \frac{(v_{i+1}-v_{i-1})}{\left(
v_{i+1}-v_{i}\right) \cdot \left( v_{i}-v_{i-1}\right) }\right) \left[ \frac{%
1}{4}\left( p_{1}^{c}-p_{1}^{\ast }\right) ^{2}\right] ,
\label{pi d minus pi c}
\end{equation}%
$\smallskip \smallskip $and$\smallskip $%
\begin{equation}
\pi _{i}^{d}-\pi _{i}^{\ast }=\left( \frac{(v_{i+1}-v_{i-1})}{\left(
v_{i+1}-v_{i}\right) \cdot \left( v_{i}-v_{i-1}\right) }\right) \left(
p_{1}^{c}-p_{1}^{\ast }\right) \left[ \frac{1}{4}\left(
p_{1}^{c}-p_{1}^{\ast }\right) +\left( p_{i}^{\ast }-c_{i}\right) \right] .
\label{pi d minus pi star}
\end{equation}%
$\smallskip \smallskip $Thus, by (\ref{pi d minus pi c}) and (\ref{pi d
minus pi star}), the critical discount factor of every intermediate firm $%
i=2,3,...,n-1$ is given by\smallskip $\smallskip $ 
\begin{equation*}
\overline{\delta }_{i}\equiv \frac{\pi _{i}^{d}-\pi _{i}^{c}}{\pi
_{i}^{d}-\pi _{i}^{\ast }}=\frac{\frac{1}{4}\left( p_{1}^{c}-p_{1}^{\ast
}\right) }{\frac{1}{4}\left( p_{1}^{c}-p_{1}^{\ast }\right) +\left(
p_{i}^{\ast }-c_{i}\right) }\smallskip ,
\end{equation*}

which is decreasing in the noncollusive price-cost margin.$\smallskip $

Turning to the top-quality firm, by following the above steps, it can be
verified that$\smallskip $%
\begin{eqnarray*}
\pi _{n}^{d} &=&\frac{\left( p_{n}^{d}-c_{n}\right) ^{2}}{\left(
v_{n}-v_{n-1}\right) },\text{ }\pi _{n}^{\ast }=\frac{\left( p_{n}^{\ast
}-c_{n}\right) ^{2}}{\left( v_{n}-v_{n-1}\right) }\text{ and }\smallskip \\
&& \\
\pi _{n}^{c} &=&\left( p_{1}^{c}-p_{1}^{\ast }+p_{n}^{\ast }-c_{n}\right)
\left( \frac{\overline{\theta }\left( v_{n}-v_{n-1}\right) -\left(
p_{n}^{\ast }-p_{n-1}^{\ast }\right) }{v_{n}-v_{n-1}}\right) ,\smallskip 
\text{ }
\end{eqnarray*}

yielding$\smallskip $%
\begin{equation*}
\overline{\delta }_{n}\equiv \frac{\pi _{n}^{d}-\pi _{n}^{c}}{\pi
_{n}^{d}-\pi _{n}^{\ast }}=\frac{\frac{1}{4}\left( p_{1}^{c}-p_{1}^{\ast
}\right) }{\frac{1}{4}\left( p_{1}^{c}-p_{1}^{\ast }\right) +\left(
p_{n}^{\ast }-c_{n}\right) }.
\end{equation*}%
$\smallskip $

Hence, it is the supplier with the smaller noncooperative price-cost margin
(between firm $n$ and firm $n-1$) who has the tighter incentive constraint.

Finally, in a similar fashion, it can be shown that%
\begin{eqnarray*}
\pi _{1}^{d} &=&\frac{\left( p_{1}^{d}-c_{1}\right) ^{2}}{\left(
v_{2}-v_{1}\right) },\text{ }\pi _{1}^{\ast }=\frac{\left( p_{1}^{\ast
}-c_{1}\right) ^{2}}{\left( v_{2}-v_{1}\right) }\text{ and }\smallskip \\
&& \\
\pi _{1}^{c} &=&\left( p_{1}^{c}-c_{1}\right) \left( \frac{%
p_{2}^{c}-p_{1}^{c}-\left( v_{2}-v_{1}\right) \underline{\theta }}{%
v_{2}-v_{1}}\right) =\frac{\left( p_{1}^{c}-c_{1}\right) \left( p_{1}^{\ast
}-c_{1}\right) }{\left( v_{2}-v_{1}\right) },\smallskip \text{ }
\end{eqnarray*}

yielding$\smallskip $%
\begin{equation*}
\overline{\delta }_{1}\equiv \frac{\pi _{1}^{d}-\pi _{1}^{c}}{\pi
_{1}^{d}-\pi _{1}^{\ast }}=\frac{\frac{1}{4}\left( p_{1}^{c}-p_{1}^{\ast
}\right) }{\frac{1}{4}\left( p_{1}^{c}-p_{1}^{\ast }\right) +\left(
p_{1}^{\ast }-c_{1}\right) }.
\end{equation*}

Hence, $p_{2}^{\ast }-c_{2}>p_{1}^{\ast }-c_{1}$ implies $\Omega _{2}>\Omega
_{1}$, whereas $p_{1}^{\ast }-c_{1}>p_{2}^{\ast }-c_{2}$ implies $\Omega
_{1}>\Omega _{2}$. We thus conclude that, if $p_{i}^{\ast
}-c_{i}>p_{j}^{\ast }-c_{j}$, then $\Omega _{i}>\Omega _{j}$ for all $i,j\in
N,$ $j\neq i$.
\end{proof}

The incentive to deviate from the collusive agreement is determined by the
short-term gain of defection $\left( \pi _{i}^{d}-\pi _{i}^{c}\right) $ and
the severity of the resulting punishment ($\pi _{i}^{d}-\pi _{i}^{\ast }=\pi
_{i}^{d}-\pi _{i}^{c}+\pi _{i}^{c}-\pi _{i}^{\ast }$). The proof of
Proposition 1 reveals that the `extra profit effect' is the same across
firms and that the heterogeneity in incentive constraints is exclusively
driven by differences in the punishment impact $\left( \pi _{i}^{c}-\pi
_{i}^{\ast }\right) $. Specifically, there is a positive relation between a
firm's market share and its noncollusive price-cost margin. Since market
shares are fixed at pre-collusive levels, members with a higher competitive
mark-up are hit relatively harder by a cartel breakdown and this creates a
stronger incentive to abide by the agreement.

The next result follows immediately.\ In stating this result, let $\triangle
c_{ij}=c_{i}-c_{j}$ for any firm $i,j\in N$ and $j\neq i$.

\begin{corollary}
For any firm $i,j\in N$ and $j\neq i$, $\exists \mu $ $\in $ $\mathcal{%
%TCIMACRO{\U{211d} }%
%BeginExpansion
\mathbb{R}
%EndExpansion
}_{++}$ such that if $\triangle c_{ij}<\mu $ and $v_{i}>v_{j}$\textit{, then 
}$\Omega _{i}>\Omega _{j}$.
\end{corollary}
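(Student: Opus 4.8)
The plan is to reduce the statement to Proposition 1 and then close it with a short continuity argument. By Proposition 1, for the given pair with $j\neq i$ one has $\Omega_i>\Omega_j$ if and only if $p_i^*-c_i>p_j^*-c_j$; equivalently,
\[
p_i^*-p_j^*>\triangle c_{ij}.
\]
Hence the whole task is to show that, when $v_i>v_j$, the equilibrium price gap $p_i^*-p_j^*$ strictly exceeds the unit-cost gap $\triangle c_{ij}$ as soon as the latter is small enough. Note that the quality ranking together with the cost assumption forces $\triangle c_{ij}\geq 0$, so it suffices to exhibit a positive threshold $\mu$ below which the displayed inequality holds.

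I would first collect two facts about the static equilibrium $p^*$. The first is that Nash prices are strictly increasing in quality: since (\ref{H1}) requires each $\theta_i^*=\theta_i(p_i^*,p_{i+1}^*)=\frac{p_{i+1}^*-p_i^*}{v_{i+1}-v_i}$ to be strictly positive and $v_{i+1}>v_i$, it follows that $p_{i+1}^*>p_i^*$ for every $i$, whence $v_i>v_j$ gives $p_i^*>p_j^*$. The second is that $p^*$ depends continuously on the cost vector: the best-reply maps (\ref{bottom-best})--(\ref{top-best}) are affine in prices and costs and constitute a contraction, so the unique fixed point $p^*$ is an affine, hence continuous, function of $(c_1,\dots,c_n)$.

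Combining these, I would define $g(\triangle c_{ij}):=(p_i^*-p_j^*)-\triangle c_{ij}$, regarded as a function of the gap obtained by varying $c_i$ with the qualities and the remaining costs held fixed. By the second fact $g$ is continuous (indeed affine) in $\triangle c_{ij}$, and by the first fact, evaluated at $\triangle c_{ij}=0$---the base configuration of equal unit costs for $i$ and $j$, at which (\ref{H1}) is assumed to hold---it equals $p_i^*-p_j^*>0$. Continuity then delivers a positive $\mu$ with $g(\triangle c_{ij})>0$ for all $\triangle c_{ij}<\mu$; for such cost gaps $p_i^*-c_i>p_j^*-c_j$, and Proposition 1 yields $\Omega_i>\Omega_j$, which is the claim.

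The step I expect to require the most care is the comparative static itself, since $p_i^*$ and $p_j^*$ both move with $c_i$: one must verify that the price gap stays strictly positive as $\triangle c_{ij}\to 0$ and that the interiority and covered-market inequalities (\ref{H1}) persist along the variation, so that both the equilibrium characterization and Proposition 1 remain valid. Both points follow from the affine dependence of $p^*$ on costs together with the fact that (\ref{H1}) is a finite system of strict inequalities, hence stable under sufficiently small perturbations of the cost vector around the base configuration; this also shows that $\mu$ can be taken uniform over such a neighborhood.
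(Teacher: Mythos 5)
Your proposal is correct and follows essentially the route the paper intends (the paper states the corollary "follows immediately" from Proposition 1 without writing out a proof): reduce to $p_i^*-p_j^*>\triangle c_{ij}$, observe that condition (\ref{H1}) forces Nash prices to be strictly increasing in quality so the price gap is strictly positive, and conclude via Proposition 1. Your added continuity/affine-dependence argument for how $p^*$ and (\ref{H1}) behave as the cost gap varies is a legitimate way of making the existence of a uniform $\mu$ precise, but it is elaboration rather than a different method.
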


Hence, if the difference in unit production costs is sufficiently small,
then it is the lowest-quality supplier who has the tightest incentive
constraint. Our analysis thus confirms the above-mentioned conclusion by
Symeonidis (1999) in case quality heterogeneity is primarily driven by
(sunk) fixed costs rather than variable costs.\footnote{%
This result can also be shown to hold in an $n$-firm variation of the model
in H\"{a}ckner (1994). The analysis is available upon request.}

Let us conclude this section with a remark on the covered market assumption.
Note that since each ICC\ is strictly concave in the own cartel price, a
sufficient condition for the cartel to keep market size constant is that at $%
p_{1}^{c}=\underline{\theta }v_{1}$, $\Omega _{i}\leq 0$ and $\frac{\partial
\Omega _{i}}{\partial p_{i}^{c}}<0$ for all $i\in N$. Endogenizing the size
of the market would therefore not affect the above findings when the
discount factor is sufficiently low. If its members are patient enough,
however, then the cartel would like to uncover the market. This case is
generally far less tractable analytically, but we show in an online appendix
that this paper's results also hold when the number of low-valuation buyers
leaving the market is sufficiently small.

\section{Conclusion}

Many markets are characterized by some degree of quality differentiation
with corresponding firm heterogeneity in cost and demand. One implication of
such differences is that colluding firms typically face non-identical
incentive constraints. Existing literature on this topic focuses on demand
differences, while ignoring the potential impact of cost heterogeneity. In
this note, we considered how cartel stability is affected when unit costs
are increasing in product quality. Under the assumption that colluding firms
maintain their pre-collusive market shares, we found that the incentive to
deviate from the collusive agreement is monotonic in the noncollusive
price-cost margin. Specifically, the supplier with the lowest competitive
mark-up is \textit{ceteris paribus} most inclined to leave the cartel.
Moreover, it is the lowest-quality seller who has the tightest incentive
constraint when differences in unit costs are sufficiently small.

\section{Appendix 1:\ Uncovered Market Case}

In the note we show that the incentive to deviate from a cartel agreement is
monotonic in the noncollusive price-cost margin and that the firm with the
lowest profit margin has the tightest incentive constraint (Proposition 1).
This result is derived under the assumption of a fixed market size. In this
appendix, we consider the possibility that the cartel `uncovers the market'
by setting collusive prices at a level where some of the lowest-valuation
customers prefer to no longer buy the product. In the following, we show
that the result of Proposition 1 also holds when the number of buyers
leaving the market is sufficiently small.

To begin, consider the incentive compatibility constraint (ICC)\ of an
intermediate firm $i\in N$:\smallskip

\begin{equation*}
\Omega _{i}\equiv \left( p_{i}^{c}-c_{i}\right) \left( \theta
_{i}^{c}-\theta _{i-1}^{c}\right) -\left( 1-\delta \right) \left(
p_{i}^{d}-c_{i}\right) \left( \theta _{i}^{d}-\theta _{i-1}^{d}\right)
-\delta \left( p_{i}^{\ast }-c_{i}\right) \left( \theta _{i}^{\ast }-\theta
_{i-1}^{\ast }\right) \geq 0.
\end{equation*}%
As the cartel uses a fixed market share rule, it holds that:

\begin{equation*}
\frac{\theta _{i}^{c}-\theta _{i-1}^{c}}{\overline{\theta }-\frac{p_{1}^{c}}{%
v_{1}}}=\frac{\theta _{i}^{\ast }-\theta _{i-1}^{\ast }}{\overline{\theta }-%
\underline{\theta }}.
\end{equation*}%
The ICC\ can thus be written as:

\begin{equation*}
\Omega _{i}\equiv \left( p_{i}^{c}-c_{i}\right) \left( \theta _{i}^{\ast
}-\theta _{i-1}^{\ast }\right) \frac{\left( \overline{\theta }-\frac{%
p_{1}^{c}}{v_{1}}\right) }{\left( \overline{\theta }-\underline{\theta }%
\right) }-\left( 1-\delta \right) \left( p_{i}^{d}-c_{i}\right) \left(
\theta _{i}^{d}-\theta _{i-1}^{d}\right) -\delta \left( p_{i}^{\ast
}-c_{i}\right) \left( \theta _{i}^{\ast }-\theta _{i-1}^{\ast }\right) \geq
0,
\end{equation*}%
which is equivalent to%
\begin{equation*}
\delta \geq \overline{\delta }_{i}\equiv \frac{\pi _{i}^{d}-\pi _{i}^{c}}{%
\pi _{i}^{d}-\pi _{i}^{\ast }}=\frac{\left( p_{i}^{d}-c_{i}\right) \left(
\theta _{i}^{d}-\theta _{i-1}^{d}\right) -\left( p_{i}^{c}-c_{i}\right)
\left( \theta _{i}^{\ast }-\theta _{i-1}^{\ast }\right) \frac{\left( 
\overline{\theta }-\frac{p_{1}^{c}}{v_{1}}\right) }{\left( \overline{\theta }%
-\underline{\theta }\right) }}{\left( p_{i}^{d}-c_{i}\right) \left( \theta
_{i}^{d}-\theta _{i-1}^{d}\right) -\left( p_{i}^{\ast }-c_{i}\right) \left(
\theta _{i}^{\ast }-\theta _{i-1}^{\ast }\right) }.
\end{equation*}

Let us now specify $\pi _{i}^{d}$, $\pi _{i}^{\ast }$ and $\pi _{i}^{c}$.
Following the proof of Proposition 1, deviating and Nash profits are
respectively given by%
\begin{equation*}
\pi _{i}^{d}=\left( p_{i}^{d}-c_{i}\right) ^{2}\left( \frac{(v_{i+1}-v_{i-1})%
}{\left( v_{i+1}-v_{i}\right) \left( v_{i}-v_{i-1}\right) }\right) ,
\end{equation*}%
and%
\begin{equation*}
\pi _{i}^{\ast }=\left( p_{i}^{\ast }-c_{i}\right) ^{2}\left( \frac{%
(v_{i+1}-v_{i-1})}{\left( v_{i+1}-v_{i}\right) \left( v_{i}-v_{i-1}\right) }%
\right) .
\end{equation*}%
As to collusive profits, note that uncovering the market in combination with
the fixed market share rule implies that the lowest quality firm has the
smallest price increase and that the price increase is rising in quality. In
other words, cartel prices must be chosen such that each marginal consumer's
location `shifts upwards' in order to maintain market shares at
pre-collusive levels. Rather than adding $p_{1}^{c}-p_{1}^{\ast }$ to its
Nash price $p_{i}^{\ast }$ (as in the covered market case), intermediate
firm $i$ should therefore raise its price by more. Let this additional
amount be indicated by $x_{i}>0$ so that its collusive price is given by $%
p_{i}^{c}=p_{i}^{\ast }+p_{1}^{c}-p_{1}^{\ast }+x_{i}$. Collusive profits
are then%
\begin{equation*}
\pi _{i}^{c}=\left( p_{i}^{c}-c_{i}\right) \left( \theta _{i}^{c}-\theta
_{i-1}^{c}\right) =\left( p_{i}^{\ast }+p_{1}^{c}-p_{1}^{\ast
}+x_{i}-c_{i}\right) \left( \frac{(v_{i+1}-v_{i-1})\left( p_{i}^{\ast
}-c_{i}\right) }{\left( v_{i+1}-v_{i}\right) \left( v_{i}-v_{i-1}\right) }%
\right) \frac{\left( \overline{\theta }-\frac{p_{1}^{c}}{v_{1}}\right) }{%
\left( \overline{\theta }-\underline{\theta }\right) }.
\end{equation*}%
Focusing on the numerator of the critical discount factor first, we have

\begin{eqnarray*}
\pi _{i}^{d}-\pi _{i}^{c} &=&\frac{\left( p_{i}^{d}-c_{i}\right)
^{2}(v_{i+1}-v_{i-1})}{\left( v_{i+1}-v_{i}\right) \left(
v_{i}-v_{i-1}\right) }-\left( p_{i}^{\ast }+p_{1}^{c}-p_{1}^{\ast
}+x_{i}-c_{i}\right) \left( \frac{\left( p_{i}^{\ast }-c_{i}\right)
(v_{i+1}-v_{i-1})}{\left( v_{i+1}-v_{i}\right) \left( v_{i}-v_{i-1}\right) }%
\right) \frac{\left( \overline{\theta }-\frac{p_{1}^{c}}{v_{1}}\right) }{%
\left( \overline{\theta }-\underline{\theta }\right) } \\
&& \\
&=&\left( \frac{(v_{i+1}-v_{i-1})}{\left( v_{i+1}-v_{i}\right) \left(
v_{i}-v_{i-1}\right) }\right) \left[ \left( p_{i}^{d}-c_{i}\right) ^{2}-%
\frac{\left( p_{i}^{\ast }+p_{1}^{c}-p_{1}^{\ast }+x_{i}-c_{i}\right) \left(
p_{i}^{\ast }-c_{i}\right) \left( \overline{\theta }-\frac{p_{1}^{c}}{v_{1}}%
\right) }{\left( \overline{\theta }-\underline{\theta }\right) }\right] .
\end{eqnarray*}%
In this case, the deviating price is given by%
\begin{eqnarray*}
p_{i}^{d} &=&\frac{\left( v_{i}-v_{i-1}\right) p_{i+1}^{c}+\left(
v_{i+1}-v_{i}\right) p_{i-1}^{c}+\left( v_{i+1}-v_{i-1}\right) c_{i}}{%
2\left( v_{i+1}-v_{i-1}\right) } \\
&=&\frac{\left( v_{i}-v_{i-1}\right) \left( p_{i+1}^{\ast
}+p_{1}^{c}-p_{1}^{\ast }+x_{i+1}\right) +\left( v_{i+1}-v_{i}\right) \left(
p_{i-1}^{\ast }+p_{1}^{c}-p_{1}^{\ast }+x_{i-1}\right) +\left(
v_{i+1}-v_{i-1}\right) c_{i}}{2\left( v_{i+1}-v_{i-1}\right) } \\
&=&p_{i}^{\ast }+\frac{1}{2}\left( p_{1}^{c}-p_{1}^{\ast }\right) +\frac{%
\left( v_{i}-v_{i-1}\right) x_{i+1}+\left( v_{i+1}-v_{i}\right) x_{i-1}}{%
2\left( v_{i+1}-v_{i-1}\right) }.
\end{eqnarray*}%
To facilitate the presentation of the analysis, let us denote 
\begin{equation*}
y_{i}=\frac{\left( v_{i}-v_{i-1}\right) x_{i+1}+\left( v_{i+1}-v_{i}\right)
x_{i-1}}{2\left( v_{i+1}-v_{i-1}\right) }\text{ \ and }s=\frac{\overline{%
\theta }-\frac{p_{1}^{c}}{v_{1}}}{\overline{\theta }-\underline{\theta }}.
\end{equation*}
Substituting in the above equation gives%
\begin{eqnarray*}
\pi _{i}^{d}-\pi _{i}^{c} &=&\left( \frac{(v_{i+1}-v_{i-1})}{\left(
v_{i+1}-v_{i}\right) \left( v_{i}-v_{i-1}\right) }\right) \left[ 
\begin{array}{c}
\left( p_{i}^{\ast }+\frac{1}{2}\left( p_{1}^{c}-p_{1}^{\ast }\right)
+y_{i}-c_{i}\right) ^{2} \\ 
-\left( p_{i}^{\ast }+p_{1}^{c}-p_{1}^{\ast }+x_{i}-c_{i}\right) \left(
p_{i}^{\ast }-c_{i}\right) s%
\end{array}%
\right] \\
&& \\
&=&\left( \frac{(v_{i+1}-v_{i-1})}{\left( v_{i+1}-v_{i}\right) \left(
v_{i}-v_{i-1}\right) }\right) \left[ 
\begin{array}{c}
\frac{1}{4}\left( p_{1}^{c}-p_{1}^{\ast }\right) ^{2}+\left( 1-s\right)
\left( p_{i}^{\ast }-c_{i}\right) \left[ \left( p_{i}^{\ast }-c_{i}\right)
+\left( p_{1}^{c}-p_{1}^{\ast }\right) +x_{i}\right] \\ 
+\left( p_{i}^{\ast }-c_{i}\right) \left( 2y_{i}-x_{i}\right) +y_{i}\left(
\left( p_{1}^{c}-p_{1}^{\ast }\right) +y_{i}\right)%
\end{array}%
\right] \\
&&
\end{eqnarray*}%
or 
\begin{equation}
\pi _{i}^{d}-\pi _{i}^{c}=\left( \frac{(v_{i+1}-v_{i-1})}{\left(
v_{i+1}-v_{i}\right) \left( v_{i}-v_{i-1}\right) }\right) \left[ \frac{1}{4}%
\left( p_{1}^{c}-p_{1}^{\ast }\right) ^{2}+z\right] ,
\label{pi d minus pi c}
\end{equation}%
where 
\begin{equation*}
z\equiv \left( 1-s\right) \left( p_{i}^{\ast }-c_{i}\right) \left[ \left(
p_{i}^{\ast }-c_{i}\right) +\left( p_{1}^{c}-p_{1}^{\ast }\right) +x_{i}%
\right] +\left( p_{i}^{\ast }-c_{i}\right) \left( 2y_{i}-x_{i}\right)
+y_{i}\left( \left( p_{1}^{c}-p_{1}^{\ast }\right) +y_{i}\right) .
\end{equation*}%
Note that with a covered market it holds that $s=1$, $x_{i}=0$ and $y_{i}=0$%
, in which case $z=0$ and the numerator reduces to the corresponding value
in the proof of Proposition 1. Turning to the denominator of the critical
discount factor, we have%
\begin{eqnarray*}
\pi _{i}^{d}-\pi _{i}^{\ast } &=&\left( p_{i}^{d}-c_{i}\right) ^{2}\frac{%
(v_{i+1}-v_{i-1})}{\left( v_{i+1}-v_{i}\right) \left( v_{i}-v_{i-1}\right) }%
-\left( p_{i}^{\ast }-c_{i}\right) ^{2}\frac{(v_{i+1}-v_{i-1})}{\left(
v_{i+1}-v_{i}\right) \left( v_{i}-v_{i-1}\right) } \\
&& \\
&=&\frac{(v_{i+1}-v_{i-1})\left[ \left( p_{i}^{d}-c_{i}\right) ^{2}-\left(
p_{i}^{\ast }-c_{i}\right) ^{2}\right] }{\left( v_{i+1}-v_{i}\right) \left(
v_{i}-v_{i-1}\right) } \\
&& \\
&=&\frac{(v_{i+1}-v_{i-1})\left[ \left( p_{i}^{\ast }+\frac{1}{2}\left(
p_{1}^{c}-p_{1}^{\ast }\right) +y_{i}-c_{i}\right) ^{2}-\left( p_{i}^{\ast
}-c_{i}\right) ^{2}\right] }{\left( v_{i+1}-v_{i}\right) \left(
v_{i}-v_{i-1}\right) } \\
&& \\
&=&\frac{(v_{i+1}-v_{i-1})\left[ \frac{1}{4}\left( p_{1}^{c}-p_{1}^{\ast
}\right) ^{2}+\left( p_{i}^{\ast }-c_{i}\right) \left( p_{1}^{c}-p_{1}^{\ast
}+2y_{i}\right) +y_{i}\left( \left( p_{1}^{c}-p_{1}^{\ast }\right)
+y_{i}\right) \right] }{\left( v_{i+1}-v_{i}\right) \left(
v_{i}-v_{i-1}\right) }
\end{eqnarray*}%
or 
\begin{equation}
\pi _{i}^{d}-\pi _{i}^{\ast }=\frac{(v_{i+1}-v_{i-1})\left[ \frac{1}{4}%
\left( p_{1}^{c}-p_{1}^{\ast }\right) ^{2}+r\right] }{\left(
v_{i+1}-v_{i}\right) \left( v_{i}-v_{i-1}\right) },
\label{pi d minus pi star}
\end{equation}%
where%
\begin{equation*}
r\equiv \left( p_{i}^{\ast }-c_{i}\right) \left( p_{1}^{c}-p_{1}^{\ast
}+2y_{i}\right) +y_{i}\left( \left( p_{1}^{c}-p_{1}^{\ast }\right)
+y_{i}\right) .
\end{equation*}%
\smallskip Note that in case of a covered market $y_{i}=0$ $\ $so that $r=$ $%
\left( p_{1}^{c}-p_{1}^{\ast }\right) \left( p_{i}^{\ast }-c_{i}\right) $ as
in the proof of Proposition 1. Combining both terms (\ref{pi d minus pi c})
and (\ref{pi d minus pi star}) gives the critical discount factor:\smallskip 
\begin{equation*}
\delta \geq \overline{\delta }_{i}\equiv \frac{%
\begin{array}{c}
\frac{1}{4}\left( p_{1}^{c}-p_{1}^{\ast }\right) ^{2}+\left( 1-s\right)
\left( p_{i}^{\ast }-c_{i}\right) \left[ \left( p_{i}^{\ast }-c_{i}\right)
+\left( p_{1}^{c}-p_{1}^{\ast }\right) +x_{i}\right] \\ 
+\left( p_{i}^{\ast }-c_{i}\right) \left( 2y_{i}-x_{i}\right) +y_{i}\left(
\left( p_{1}^{c}-p_{1}^{\ast }\right) +y_{i}\right)%
\end{array}%
}{\frac{1}{4}\left( p_{1}^{c}-p_{1}^{\ast }\right) ^{2}+\left( p_{i}^{\ast
}-c_{i}\right) \left( p_{1}^{c}-p_{1}^{\ast }+2y_{i}\right) +y_{i}\cdot
\left( \left( p_{1}^{c}-p_{1}^{\ast }\right) +y_{i}\right) }.
\end{equation*}%
As a final step, let us evaluate this critical discount factor with respect
to the price-cost margin $\left( p_{i}^{\ast }-c_{i}\right) $. Taking the
first-derivative with respect to the noncollusive profit margin of firm $i$
yields:

\begin{eqnarray*}
\frac{\partial \overline{\delta }_{i}}{\partial \left( \left( p_{i}^{\ast
}-c_{i}\right) \right) } &=&\frac{\left\{ 
\begin{array}{c}
\frac{1}{4}\left( p_{1}^{c}-p_{1}^{\ast }\right) ^{2}+\left( p_{i}^{\ast
}-c_{i}\right) \left( p_{1}^{c}-p_{1}^{\ast }+2y_{i}\right)  \\ 
+y_{i}\left( \left( p_{1}^{c}-p_{1}^{\ast }\right) +y_{i}\right) 
\end{array}%
\right\} \left[ 
\begin{array}{c}
\left( 1-s\right) \left[ 2\left( p_{i}^{\ast }-c_{i}\right) +\left(
p_{1}^{c}-p_{1}^{\ast }\right) +x_{i}\right]  \\ 
+\left( 2y_{i}-x_{i}\right) 
\end{array}%
\right] }{\left( \frac{1}{4}\left( p_{1}^{c}-p_{1}^{\ast }\right)
^{2}+\left( p_{i}^{\ast }-c_{i}\right) \left( p_{1}^{c}-p_{1}^{\ast
}+2y_{i}\right) +y_{i}\left( \left( p_{1}^{c}-p_{1}^{\ast }\right)
+y_{i}\right) \right) ^{2}} \\
&&-\frac{\left\{ 
\begin{array}{c}
\frac{1}{4}\left( p_{1}^{c}-p_{1}^{\ast }\right) ^{2}+\left( 1-s\right)
\left( p_{i}^{\ast }-c_{i}\right) \left[ \left( p_{i}^{\ast }-c_{i}\right)
+\left( p_{1}^{c}-p_{1}^{\ast }\right) +x_{i}\right]  \\ 
+\left( p_{i}^{\ast }-c_{i}\right) \left( 2y_{i}-x_{i}\right) +y_{i}\left(
\left( p_{1}^{c}-p_{1}^{\ast }\right) +y_{i}\right) 
\end{array}%
\right\} \left( p_{1}^{c}-p_{1}^{\ast }+2y_{i}\right) }{\left( \frac{1}{4}%
\left( p_{1}^{c}-p_{1}^{\ast }\right) ^{2}+\left( p_{i}^{\ast }-c_{i}\right)
\left( p_{1}^{c}-p_{1}^{\ast }+2y_{i}\right) +y_{i}\left( \left(
p_{1}^{c}-p_{1}^{\ast }\right) +y_{i}\right) \right) ^{2}}, \\
&&
\end{eqnarray*}%
which is negative when%
\begin{eqnarray*}
&&\left( 1-s\right) \left( p_{i}^{\ast }-c_{i}\right) \left\{ 2y_{i}\left(
\left( p_{1}^{c}-p_{1}^{\ast }\right) +y_{i}\right) +\left( p_{i}^{\ast
}-c_{i}\right) \left( p_{1}^{c}-p_{1}^{\ast }+2y_{i}\right) +\frac{1}{2}%
\left( p_{1}^{c}-p_{1}^{\ast }\right) ^{2}\right\}  \\
&& \\
&&-s\left\{ \left( p_{1}^{c}-p_{1}^{\ast }\right) +x_{i}\right\} \left[
y_{i}\left( \left( p_{1}^{c}-p_{1}^{\ast }\right) +y_{i}\right) +\frac{1}{4}%
\left( p_{1}^{c}-p_{1}^{\ast }\right) ^{2}\right] <0.
\end{eqnarray*}%
Note that this condition holds for $s\rightarrow 1$. In a similar fashion,
this can be shown to be true for the bottom and top quality firm cases.
Thus, we conclude that the result of Proposition 1 also holds when the
cartel uncovers the market and the fraction of buyers no longer buying the
product is sufficiently small.

\section{Appendix 2:\ Example with Nonuniform Distribution}

Let us present a simple two-firm example with a non-uniform distribution of $%
\theta $. The results of the paper are also shown to hold in this case.
Though of course far from being a proof, this suggests that our findings may
apply for a wider class of customer distributions.

Consider the model of the paper, but with two firms and a simple `two-step
uniform' distribution function where a subset of consumers (of mass $s$)\ is
uniformly distributed over a given interval $\left[ \underline{\theta },%
\widetilde{\theta }\right] $, with $\widetilde{\theta }\in $ $\left( 
\underline{\theta },\overline{\theta }\right) $ and another subset (of mass $%
\left( 1-s\right) \neq s$) is uniformly distributed over $\left( \widetilde{%
\theta },\overline{\theta }\right] $ (higher willingness to pay consumers).
The density function takes the form: 
\begin{equation*}
f(\theta )=\left\{ 
\begin{array}{c}
\frac{s}{(\widetilde{\theta }-\underline{\theta })}\text{ for }\theta \in %
\left[ \underline{\theta },\widetilde{\theta }\right] \text{ and} \\ 
\frac{1-s}{(\overline{\theta }-\widetilde{\theta })}\text{ for }\theta \in
\left( \widetilde{\theta },\overline{\theta }\right] .%
\end{array}%
\right. 
\end{equation*}%
In the following, suppose that $\theta _{1}(p_{1},p_{2})=\frac{p_{2}-p_{1}}{%
v_{2}-v_{1}}$ $\leq $ $\widetilde{\theta }$ so that the profit functions are
given by: 
\begin{eqnarray*}
\pi _{1} &=&\left( p_{1}-c_{1}\right) \left( \frac{p_{2}-p_{1}}{v_{2}-v_{1}}-%
\underline{\theta }\right) \frac{s}{(\widetilde{\theta }-\underline{\theta })%
},\text{ and} \\
\pi _{2} &=&\left( p_{2}-c_{2}\right) \left( \widetilde{\theta }-\frac{%
p_{2}-p_{1}}{v_{2}-v_{1}}\right) \frac{s}{(\widetilde{\theta }-\underline{%
\theta })}+\left( p_{2}-c_{2}\right) \left( \overline{\theta }-\widetilde{%
\theta }\right) \frac{1-s}{(\overline{\theta }-\widetilde{\theta })} \\
&=&\left( p_{2}-c_{2}\right) \left( \frac{\widetilde{\theta }\left(
v_{2}-v_{1}\right) s-sp_{2}+sp_{1}+\left( v_{2}-v_{1}\right) (\widetilde{%
\theta }-\underline{\theta })\left( 1-s\right) }{\left( v_{2}-v_{1}\right) (%
\widetilde{\theta }-\underline{\theta })}\right) .
\end{eqnarray*}%
The first-order conditions give the best response functions:%
\begin{eqnarray*}
\widehat{p}_{1} &=&\frac{1}{2}\left( p_{2}-\left( v_{2}-v_{1}\right) 
\underline{\theta }+c_{1}\right)  \\
\widehat{p}_{2} &=&\frac{\left( v_{2}-v_{1}\right) \left( \widetilde{\theta }%
-\underline{\theta }(1-s)\right) +sp_{1}+sc_{2}}{2s}
\end{eqnarray*}%
Combining gives the Nash prices and corresponding profits:%
\begin{eqnarray*}
p_{1}^{\ast } &=&\frac{\left( v_{2}-v_{1}\right) \left( \widetilde{\theta }-%
\underline{\theta }\left( 1+s\right) \right) +2sc_{1}+sc_{2}}{3s} \\
p_{2}^{\ast } &=&\frac{\left( v_{2}-v_{1}\right) \left( 2\left( \widetilde{%
\theta }-\underline{\theta }\right) +\underline{\theta }s\right)
+2sc_{2}+sc_{1}}{3s}
\end{eqnarray*}%
\begin{eqnarray*}
\pi _{1}^{\ast } &=&\left( \frac{\left( v_{2}-v_{1}\right) \left( \widetilde{%
\theta }-\underline{\theta }\left( 1+s\right) \right) -sc_{1}+sc_{2}}{3s}%
\right) \left( \frac{\left( v_{2}-v_{1}\right) \left( \widetilde{\theta }-%
\underline{\theta }\left( 1+s\right) \right) -sc_{1}+sc_{2}}{3\left(
v_{2}-v_{1}\right) (\widetilde{\theta }-\underline{\theta })}\right)  \\
&=&\frac{\left( p_{1}^{\ast }-c_{1}\right) ^{2}}{\left( v_{2}-v_{1}\right) }%
\frac{s}{(\widetilde{\theta }-\underline{\theta })}.
\end{eqnarray*}%
\begin{eqnarray*}
\pi _{2}^{\ast } &=&\left( \frac{\left( v_{2}-v_{1}\right) \left( 2\left( 
\widetilde{\theta }-\underline{\theta }\right) +\underline{\theta }s\right)
-sc_{2}+sc_{1}}{3s}\right) \left( \frac{\left( v_{2}-v_{1}\right) \left[
2\left( \widetilde{\theta }-\underline{\theta }\right) +\underline{\theta }s%
\right] -sc_{2}+sc_{1}}{3\left( v_{2}-v_{1}\right) (\widetilde{\theta }-%
\underline{\theta })}\right)  \\
&=&\frac{\left( p_{2}^{\ast }-c_{2}\right) ^{2}}{\left( v_{2}-v_{1}\right) }%
\frac{s}{(\widetilde{\theta }-\underline{\theta })}.
\end{eqnarray*}%
Under collusion, prices and profits are respectively given by:%
\begin{eqnarray*}
&&p_{1}^{c} \\
p_{2}^{c} &=&p_{2}^{\ast }+\left( p_{1}^{c}-p_{1}^{\ast }\right) =\frac{%
\left( v_{2}-v_{1}\right) \left( \widetilde{\theta }-\underline{\theta }+2%
\underline{\theta }s\right) +sc_{2}-sc_{1}+3sp_{1}^{c}}{3s}
\end{eqnarray*}%
and%
\begin{eqnarray*}
\pi _{1}^{c} &=&\left( p_{1}^{c}-c_{1}\right) \left( \frac{%
p_{2}^{c}-p_{1}^{c}}{v_{2}-v_{1}}-\underline{\theta }\right) \frac{s}{(%
\widetilde{\theta }-\underline{\theta })}=\left( p_{1}^{c}-c_{1}\right)
\left( \frac{p_{2}^{\ast }-p_{1}^{\ast }}{v_{2}-v_{1}}-\underline{\theta }%
\right) \frac{s}{(\widetilde{\theta }-\underline{\theta })} \\
&& \\
&=&\left( p_{1}^{c}-c_{1}\right) \left( \frac{\left( v_{2}-v_{1}\right)
\left( \widetilde{\theta }-\underline{\theta }\left( 1+s\right) \right)
+sc_{2}-sc_{1}}{3\left( v_{2}-v_{1}\right) (\widetilde{\theta }-\underline{%
\theta })}\right) =\frac{\left( p_{1}^{c}-c_{1}\right) \left( p_{1}^{\ast
}-c_{1}\right) }{\left( v_{2}-v_{1}\right) }\frac{s}{(\widetilde{\theta }-%
\underline{\theta })}. \\
&&
\end{eqnarray*}%
\begin{eqnarray*}
\pi _{2}^{c} &=&\left( p_{2}^{c}-c_{2}\right) \left( \frac{\widetilde{\theta 
}\left( v_{2}-v_{1}\right) s-s\left( p_{2}^{c}-p_{1}^{c}\right) +\left(
v_{2}-v_{1}\right) (\widetilde{\theta }-\underline{\theta })\left(
1-s\right) }{\left( v_{2}-v_{1}\right) (\widetilde{\theta }-\underline{%
\theta })}\right)  \\
&=&\left( p_{2}^{c}-c_{2}\right) \left( \frac{\widetilde{\theta }\left(
v_{2}-v_{1}\right) s-s\left( p_{2}^{\ast }-p_{1}^{\ast }\right) +\left(
v_{2}-v_{1}\right) (\widetilde{\theta }-\underline{\theta })\left(
1-s\right) }{\left( v_{2}-v_{1}\right) (\widetilde{\theta }-\underline{%
\theta })}\right)  \\
&=&\left( p_{2}^{c}-c_{2}\right) \left( \frac{3\widetilde{\theta }\left(
v_{2}-v_{1}\right) s-\left( v_{2}-v_{1}\right) \left( \widetilde{\theta }-%
\underline{\theta }+2\underline{\theta }s\right) -sc_{2}+sc_{1}+3\left(
v_{2}-v_{1}\right) (\widetilde{\theta }-\underline{\theta })\left(
1-s\right) }{3\left( v_{2}-v_{1}\right) (\widetilde{\theta }-\underline{%
\theta })}\right)  \\
&=&\left( p_{2}^{c}-c_{2}\right) \left( \frac{\left( v_{2}-v_{1}\right) %
\left[ 2\left( \widetilde{\theta }-\underline{\theta }\right) +\underline{%
\theta }s\right] -sc_{2}+sc_{1}}{3\left( v_{2}-v_{1}\right) (\widetilde{%
\theta }-\underline{\theta })}\right) =\frac{\left( p_{2}^{c}-c_{2}\right)
\left( p_{2}^{\ast }-c_{2}\right) }{\left( v_{2}-v_{1}\right) }\frac{s}{(%
\widetilde{\theta }-\underline{\theta })}.
\end{eqnarray*}%
Finally, deviating price and profits are:%
\begin{eqnarray*}
p_{1}^{d} &=&\frac{1}{2}\left( p_{2}^{c}-\left( v_{2}-v_{1}\right) 
\underline{\theta }+c_{1}\right)  \\
p_{2}^{d} &=&\frac{\left( v_{2}-v_{1}\right) \left( \widetilde{\theta }-%
\underline{\theta }(1-s)\right) +sp_{1}^{c}+sc_{2}}{2s}
\end{eqnarray*}%
and%
\begin{eqnarray*}
\pi _{1}^{d} &=&\left( p_{1}^{d}-c_{1}\right) \left( \frac{%
p_{2}^{c}-p_{1}^{d}}{v_{2}-v_{1}}-\underline{\theta }\right) \frac{s}{(%
\widetilde{\theta }-\underline{\theta })}= \\
&& \\
&=&\frac{1}{2}\left( p_{2}^{c}-\left( v_{2}-v_{1}\right) \underline{\theta }%
-c_{1}\right) \left( \frac{p_{2}^{c}-\left( v_{2}-v_{1}\right) \underline{%
\theta }-c_{1}}{2\left( v_{2}-v_{1}\right) }\right) \frac{s}{(\widetilde{%
\theta }-\underline{\theta })} \\
&& \\
&=&\frac{\left( p_{1}^{d}-c_{1}\right) ^{2}}{\left( v_{2}-v_{1}\right) }%
\frac{s}{(\widetilde{\theta }-\underline{\theta })}, \\
&&
\end{eqnarray*}%
and,%
\begin{eqnarray*}
\pi _{2}^{d} &=&\left( p_{2}^{d}-c_{2}\right) \left( \frac{\widetilde{\theta 
}\left( v_{2}-v_{1}\right) s-s\left( p_{2}^{d}-p_{1}^{c}\right) +\left(
v_{2}-v_{1}\right) (\widetilde{\theta }-\underline{\theta })\left(
1-s\right) }{\left( v_{2}-v_{1}\right) (\widetilde{\theta }-\underline{%
\theta })}\right)  \\
&& \\
&=&\left( \frac{\left( v_{2}-v_{1}\right) \left( \widetilde{\theta }-%
\underline{\theta }(1-s)\right) +sp_{1}^{c}-sc_{2}}{2s}\right) \left( \frac{%
\left( v_{2}-v_{1}\right) \left[ \widetilde{\theta }-\underline{\theta }%
\left( 1-s\right) \right] +sp_{1}^{c}-sc_{2}}{2\left( v_{2}-v_{1}\right) (%
\widetilde{\theta }-\underline{\theta })}\right)  \\
&& \\
&=&\frac{\left( p_{2}^{d}-c_{2}\right) ^{2}}{\left( v_{2}-v_{1}\right) }%
\frac{s}{(\widetilde{\theta }-\underline{\theta })}.
\end{eqnarray*}%
Combining to obtain the critical discount factor of each firm gives:%
\begin{eqnarray*}
\overline{\delta }_{1} &=&\dfrac{\pi _{1}^{d}-\pi _{1}^{c}}{\pi _{1}^{d}-\pi
_{1}^{\ast }}=\dfrac{\dfrac{\left( p_{1}^{d}-c_{1}\right) ^{2}}{\left(
v_{2}-v_{1}\right) }\dfrac{s}{(\widetilde{\theta }-\underline{\theta })}-%
\dfrac{\left( p_{1}^{c}-c_{1}\right) \left( p_{1}^{\ast }-c_{1}\right) }{%
\left( v_{2}-v_{1}\right) }\dfrac{s}{(\widetilde{\theta }-\underline{\theta }%
)}}{\dfrac{\left( p_{1}^{d}-c_{1}\right) ^{2}}{\left( v_{2}-v_{1}\right) }%
\dfrac{s}{(\widetilde{\theta }-\underline{\theta })}-\dfrac{\left(
p_{1}^{\ast }-c_{1}\right) ^{2}}{\left( v_{2}-v_{1}\right) }\dfrac{s}{(%
\widetilde{\theta }-\underline{\theta })}} \\
&& \\
&=&\frac{\left( p_{1}^{d}-c_{1}\right) ^{2}-\left( p_{1}^{c}-c_{1}\right)
\left( p_{1}^{\ast }-c_{1}\right) }{\left( p_{1}^{d}-c_{1}\right)
^{2}-\left( p_{1}^{\ast }-c_{1}\right) ^{2}}=\frac{\dfrac{1}{4}\left(
p_{1}^{c}-p_{1}^{\ast }\right) ^{2}}{\left( p_{1}^{c}-p_{1}^{\ast }\right)
\left( \frac{1}{4}\left( p_{1}^{c}-p_{1}^{\ast }\right) +\left( p_{1}^{\ast
}-c_{1}\right) \right) } \\
&& \\
&=&\frac{\dfrac{1}{4}\left( p_{1}^{c}-p_{1}^{\ast }\right) }{\dfrac{1}{4}%
\left( p_{1}^{c}-p_{1}^{\ast }\right) +\left( p_{1}^{\ast }-c_{1}\right) }.
\end{eqnarray*}

and%
\begin{eqnarray*}
\overline{\delta }_{2} &=&\dfrac{\pi _{2}^{d}-\pi _{2}^{c}}{\pi _{2}^{d}-\pi
_{2}^{\ast }}=\dfrac{\dfrac{\left( p_{2}^{d}-c_{2}\right) ^{2}}{\left(
v_{2}-v_{1}\right) }\dfrac{s}{(\widetilde{\theta }-\underline{\theta })}-%
\dfrac{\left( p_{2}^{c}-c_{2}\right) \left( p_{2}^{\ast }-c_{2}\right) }{%
\left( v_{2}-v_{1}\right) }\dfrac{s}{(\widetilde{\theta }-\underline{\theta }%
)}}{\dfrac{\left( p_{2}^{d}-c_{2}\right) ^{2}}{\left( v_{2}-v_{1}\right) }%
\dfrac{s}{(\widetilde{\theta }-\underline{\theta })}-\dfrac{\left(
p_{2}^{\ast }-c_{2}\right) ^{2}}{\left( v_{2}-v_{1}\right) }\dfrac{s}{(%
\widetilde{\theta }-\underline{\theta })}} \\
&& \\
&=&\dfrac{\left( p_{2}^{d}-c_{2}\right) ^{2}-\left( p_{2}^{c}-c_{2}\right)
\left( p_{2}^{\ast }-c_{2}\right) }{\left( p_{2}^{d}-c_{2}\right)
^{2}-\left( p_{2}^{\ast }-c_{2}\right) ^{2}}=\frac{\dfrac{1}{4}\left(
p_{1}^{c}-p_{1}^{\ast }\right) ^{2}}{\left( p_{1}^{c}-p_{1}^{\ast }\right)
\left( \frac{1}{4}\left( p_{1}^{c}-p_{1}^{\ast }\right) +\left( p_{2}^{\ast
}-c_{2}\right) \right) } \\
&& \\
&=&\frac{\dfrac{1}{4}\left( p_{1}^{c}-p_{1}^{\ast }\right) }{\dfrac{1}{4}%
\left( p_{1}^{c}-p_{1}^{\ast }\right) +\left( p_{2}^{\ast }-c_{2}\right) }.
\end{eqnarray*}

\section{Appendix 3:\ Comparisons with Other Models}

Symeonidis (1999) analyzes a representative consumer model, which differs
from ours in many ways. Meaningful comparisons are therefore difficult to
make. H\"{a}ckner's (1994) model is also different, but in many ways
comparable. Specifically, he is using a utility specification of the form $%
U\left( \theta \right) =v_{i}\left( \theta -p_{i}\right) $, whereas we
follow Mussa and Rosen (1978) which uses $U\left( \theta \right)
=v_{i}\theta -p_{i}$. To address this issue, we have tried to clarify the
differences between the different settings in the introduction. Moreover, we
have performed the same analysis in H\"{a}ckner's (1994) setting. This
analysis is presented below, but let us first summarize the main conclusion.
The main finding of Proposition 1 may not generally hold in H\"{a}ckner's
(1994) model. In particular, conclusions may be different when the highest
quality firms have the lowest profit margins and \textit{vice versa}. Yet,
the results of Proposition 1 do hold when the noncollusive profit margin is
increasing in quality. Moreover, if differences in unit costs are
sufficiently small, then it is indeed the lowest-quality seller who is most
inclined to deviate, all else unchanged. Consequently, the result of the
Corollary also holds when taking H\"{a}ckner's (1994) approach. We have
added a footnote highlighting this point (footnote 8). Consider an
intermediate firm in an $n$-firm variant of H\"{a}ckner's (1994) model with
cost heterogeneity. A consumer located at $\theta _{i}$ is indifferent
between buying from firm $i$ and $i+1$ when:%
\begin{eqnarray*}
\theta _{i}v_{i+1}-v_{i+1}p_{i+1} &=&\theta _{i}v_{i}-v_{i}p_{i}\text{ or} \\
\theta _{i} &=&\frac{v_{i+1}p_{i+1}-v_{i}p_{i}}{v_{i+1}-v_{i}}.
\end{eqnarray*}%
The profit function is then given by%
\begin{equation*}
\pi _{i}=\left( p_{i}-c_{i}\right) \left( \frac{v_{i+1}\left(
v_{i}-v_{i-1}\right) p_{i+1}+v_{i-1}\left( v_{i+1}-v_{i}\right)
p_{i-1}-v_{i}\left( v_{i+1}-v_{i-1}\right) p_{i}}{\left(
v_{i+1}-v_{i}\right) \left( v_{i}-v_{i-1}\right) }\right) ,
\end{equation*}%
which yields the following best-response function:%
\begin{equation*}
\widehat{p}_{i}\left( p_{i-1},p_{i+1}\right) =\frac{v_{i+1}\left(
v_{i}-v_{i-1}\right) p_{i+1}+v_{i-1}\left( v_{i+1}-v_{i}\right)
p_{i-1}+v_{i}\left( v_{i+1}-v_{i-1}\right) c_{i}}{2v_{i}\left(
v_{i+1}-v_{i-1}\right) }.
\end{equation*}%
Following the same steps in the proof of Proposition 1, this gives demand:%
\begin{equation*}
\theta _{i}^{\ast }-\theta _{i-1}^{\ast }=\frac{v_{i}\left(
v_{i+1}-v_{i-1}\right) }{\left( v_{i+1}-v_{i}\right) \left(
v_{i}-v_{i-1}\right) }\left( p_{i}^{\ast }-c_{i}\right) 
\end{equation*}%
and Nash profits%
\begin{equation*}
\pi _{i}^{\ast }=\left( p_{i}^{\ast }-c_{i}\right) ^{2}\left( \frac{%
v_{i}\left( v_{i+1}-v_{i-1}\right) }{\left( v_{i+1}-v_{i}\right) \left(
v_{i}-v_{i-1}\right) }\right) .
\end{equation*}%
In a similar fashion, it can be shown that:%
\begin{equation*}
\pi _{i}^{d}=\left( p_{i}^{d}-c_{i}\right) ^{2}\left( \frac{v_{i}\left(
v_{i+1}-v_{i-1}\right) }{\left( v_{i+1}-v_{i}\right) \left(
v_{i}-v_{i-1}\right) }\right) .
\end{equation*}%
As to collusive profits, we know that (due to the fixed market share rule):%
\begin{eqnarray*}
\theta _{i}^{\ast }v_{i+1}-v_{i+1}p_{i+1}^{c} &=&\theta _{i}^{\ast
}v_{i}-v_{i}p_{i}^{c}\text{ or} \\
p_{i+1}^{c} &=&p_{i+1}^{\ast }+\frac{v_{i}}{v_{i+1}}\left(
p_{i}^{c}-p_{i}^{\ast }\right) 
\end{eqnarray*}%
In general, the collusive price is then:%
\begin{equation*}
p_{i}^{c}=p_{i}^{\ast }+\frac{v_{1}}{v_{i}}\left( p_{1}^{c}-p_{1}^{\ast
}\right) .
\end{equation*}%
Following the covered market assumption, we know that $\theta
_{i}^{c}-\theta _{i-1}^{c}=\theta _{i}^{\ast }-\theta _{i-1}^{\ast }$ and
therefore,%
\begin{equation*}
\pi _{i}^{c}=\left( p_{i}^{c}-c_{i}\right) \left( \theta _{i}^{c}-\theta
_{i-1}^{c}\right) =\left( p_{i}^{\ast }+\frac{v_{1}}{v_{i}}\left(
p_{1}^{c}-p_{1}^{\ast }\right) -c_{i}\right) \left( \frac{%
v_{i}(v_{i+1}-v_{i-1})}{\left( v_{i+1}-v_{i}\right) \left(
v_{i}-v_{i-1}\right) }\left( p_{i}^{\ast }-c_{i}\right) \right) .
\end{equation*}%
Moreover,%
\begin{eqnarray*}
p_{i}^{d} &=&\frac{v_{i+1}\left( v_{i}-v_{i-1}\right)
p_{i+1}^{c}+v_{i-1}\left( v_{i+1}-v_{i}\right) p_{i-1}^{c}+v_{i}\left(
v_{i+1}-v_{i-1}\right) c_{i}}{2v_{i}\left( v_{i+1}-v_{i-1}\right) } \\
&=&\frac{v_{i+1}\left( v_{i}-v_{i-1}\right) \left( p_{i+1}^{\ast }+\frac{%
v_{1}\left( p_{1}^{c}-p_{1}^{\ast }\right) }{v_{i+1}}\right) +v_{i-1}\left(
v_{i+1}-v_{i}\right) \left( p_{i-1}^{\ast }+\frac{v_{1}\left(
p_{1}^{c}-p_{1}^{\ast }\right) }{v_{i-1}}\right) +v_{i}\left(
v_{i+1}-v_{i-1}\right) c_{i}}{2v_{i}\left( v_{i+1}-v_{i-1}\right) } \\
&=&\frac{v_{i+1}\left( v_{i}-v_{i-1}\right) p_{i+1}^{\ast }+v_{i-1}\left(
v_{i+1}-v_{i}\right) p_{i-1}^{\ast }+\left( v_{i+1}-v_{i-1}\right)
v_{1}\left( p_{1}^{c}-p_{1}^{\ast }\right) +v_{i}\left(
v_{i+1}-v_{i-1}\right) c_{i}}{2v_{i}\left( v_{i+1}-v_{i-1}\right) } \\
&=&p_{i}^{\ast }+\frac{1}{2}\frac{v_{1}}{v_{i}}\left( p_{1}^{c}-p_{1}^{\ast
}\right) .
\end{eqnarray*}%
Combining the above profit specifications gives%
\begin{eqnarray*}
\pi _{i}^{d}-\pi _{i}^{c} &=&\frac{v_{i}\left( v_{i+1}-v_{i-1}\right) \left(
p_{i}^{d}-c_{i}\right) ^{2}}{\left( v_{i+1}-v_{i}\right) \left(
v_{i}-v_{i-1}\right) }-\frac{v_{i}(v_{i+1}-v_{i-1})\left( p_{i}^{\ast
}-c_{i}\right) \left( p_{i}^{\ast }+\frac{v_{1}}{v_{i}}\left(
p_{1}^{c}-p_{1}^{\ast }\right) -c_{i}\right) }{\left( v_{i+1}-v_{i}\right)
\left( v_{i}-v_{i-1}\right) } \\
&=&\frac{v_{i}\left( v_{i+1}-v_{i-1}\right) \left[ \left( p_{i}^{\ast }+%
\frac{1}{2}\frac{v_{1}}{v_{i}}\left( p_{1}^{c}-p_{1}^{\ast }\right)
-c_{i}\right) ^{2}-\left( p_{i}^{\ast }-c_{i}\right) ^{2}-\frac{v_{1}}{v_{i}}%
\left( p_{1}^{c}-p_{1}^{\ast }\right) \left( p_{i}^{\ast }-c_{i}\right) %
\right] }{\left( v_{i+1}-v_{i}\right) \left( v_{i}-v_{i-1}\right) } \\
&=&\frac{v_{i}\left( v_{i+1}-v_{i-1}\right) \left[ \frac{1}{4}\left( \frac{%
v_{1}}{v_{i}}\right) ^{2}\left( p_{1}^{c}-p_{1}^{\ast }\right) ^{2}\right] }{%
\left( v_{i+1}-v_{i}\right) \left( v_{i}-v_{i-1}\right) }.
\end{eqnarray*}%
and%
\begin{eqnarray*}
\pi _{i}^{d}-\pi _{i}^{\ast } &=&\frac{v_{i}\left( v_{i+1}-v_{i-1}\right)
\left( p_{i}^{d}-c_{i}\right) ^{2}}{\left( v_{i+1}-v_{i}\right) \left(
v_{i}-v_{i-1}\right) }-\frac{v_{i}(v_{i+1}-v_{i-1})\left( p_{i}^{\ast
}-c_{i}\right) ^{2}}{\left( v_{i+1}-v_{i}\right) \cdot \left(
v_{i}-v_{i-1}\right) } \\
&=&\frac{v_{i}(v_{i+1}-v_{i-1})\left( \left( p_{i}^{\ast }+\frac{1}{2}\frac{%
v_{1}}{v_{i}}\left( p_{1}^{c}-p_{1}^{\ast }\right) -c_{i}\right) ^{2}-\left(
p_{i}^{\ast }-c_{i}\right) ^{2}\right) }{\left( v_{i+1}-v_{i}\right) \cdot
\left( v_{i}-v_{i-1}\right) } \\
&=&\frac{v_{i}(v_{i+1}-v_{i-1})\left( \frac{1}{4}\left( \frac{v_{1}}{v_{i}}%
\right) ^{2}\left( p_{1}^{c}-p_{1}^{\ast }\right) ^{2}+\frac{v_{1}}{v_{i}}%
\left( p_{1}^{c}-p_{1}^{\ast }\right) \left( p_{i}^{\ast }-c_{i}\right)
\right) }{\left( v_{i+1}-v_{i}\right) \cdot \left( v_{i}-v_{i-1}\right) } \\
&=&\frac{v_{i}(v_{i+1}-v_{i-1})\left( \frac{v_{1}}{v_{i}}\left(
p_{1}^{c}-p_{1}^{\ast }\right) \right) \left( \frac{1}{4}\left( \frac{v_{1}}{%
v_{i}}\right) \left( p_{1}^{c}-p_{1}^{\ast }\right) +\left( p_{i}^{\ast
}-c_{i}\right) \right) }{\left( v_{i+1}-v_{i}\right) \cdot \left(
v_{i}-v_{i-1}\right) }.
\end{eqnarray*}%
Thus, the critical discount factor of an intermediate firm $i$ in H\"{a}%
ckner's (1994) setting with cost heterogeneity and $n$ firms is: 
\begin{eqnarray*}
\delta  &\geq &\overline{\delta }_{i}=\frac{\pi _{i}^{d}-\pi _{i}^{c}}{\pi
_{i}^{d}-\pi _{i}^{\ast }}=\frac{\left( \frac{v_{i}\left(
v_{i+1}-v_{i-1}\right) }{\left( v_{i+1}-v_{i}\right) \left(
v_{i}-v_{i-1}\right) }\right) \left[ \frac{1}{4}\left( \frac{v_{1}}{v_{i}}%
\right) ^{2}\left( p_{1}^{c}-p_{1}^{\ast }\right) ^{2}\right] }{\left( \frac{%
v_{i}(v_{i+1}-v_{i-1})}{\left( v_{i+1}-v_{i}\right) \cdot \left(
v_{i}-v_{i-1}\right) }\right) \left( \frac{v_{1}}{v_{i}}\left(
p_{1}^{c}-p_{1}^{\ast }\right) \right) \left( \frac{1}{4}\left( \frac{v_{1}}{%
v_{i}}\right) \left( p_{1}^{c}-p_{1}^{\ast }\right) +\left( p_{i}^{\ast
}-c_{i}\right) \right) } \\
&=&\frac{\frac{1}{4}\left( \frac{v_{1}}{v_{i}}\right) \left(
p_{1}^{c}-p_{1}^{\ast }\right) }{\frac{1}{4}\left( \frac{v_{1}}{v_{i}}%
\right) \left( p_{1}^{c}-p_{1}^{\ast }\right) +\left( p_{i}^{\ast
}-c_{i}\right) }=\frac{\frac{1}{4}v_{1}\left( p_{1}^{c}-p_{1}^{\ast }\right) 
}{\frac{1}{4}v_{1}\left( p_{1}^{c}-p_{1}^{\ast }\right) +v_{i}\left(
p_{i}^{\ast }-c_{i}\right) }.
\end{eqnarray*}%
Results for the highest and lowest quality firm can be derived in a similar
way (see the proof of Proposition 1).

Recall that the corresponding critical discount factor in Proposition 1 is
given by%
\begin{equation*}
\delta \geq \overline{\delta }_{i}=\frac{\pi _{i}^{d}-\pi _{i}^{c}}{\pi
_{i}^{d}-\pi _{i}^{\ast }}=\frac{\frac{1}{4}\left( p_{1}^{c}-p_{1}^{\ast
}\right) }{\frac{1}{4}\left( p_{1}^{c}-p_{1}^{\ast }\right) +\left(
p_{i}^{\ast }-c_{i}\right) }.
\end{equation*}%
Observe that, compared to the result of Proposition 1, it is now possible
that a firm with a higher competitive profit margin is more eager to leave
the cartel when it is of sufficiently low quality (\textit{i.e.}, a high $%
p_{i}^{\ast }-c_{i}$ may be more than neutralized by a low $v_{i}$) and 
\textit{vice versa}. The result of Proposition 1 does apply, however, when
the noncollusive price-cost margin is increasing in quality. Moreover, and
in contrast to H\"{a}ckner, it is indeed the lowest-quality supplier who has
the tightest incentive constraint when differences in unit costs are
sufficiently small (in accordance with the result of the Corollary). As to
the latter, note that H\"{a}ckner (1994)\ considers a duopoly without costs
and does not impose a fixed market share rule, which is an important driver
of our finding.

\end{document}